\DeclareMathOperator{\md}{sh}
\newcommand{\Th}{^{\mathrm{th}}}
\newcommand{\id}{\openone}
\newcommand{\ot}{\otimes}
\newcommand{\comment}[1]{}
\newcommand{\eps}{\varepsilon}
\newcommand{\ket}[1]{|#1\rangle}
\newcommand{\bra}[1]{\langle #1|}
\newcommand{\ketbra}[2]{|#1\rangle\!\langle#2|}
\newcommand{\braket}[2]{\langle#1|#2\rangle}
\newcommand{\proj}[1]{|#1\rangle\!\langle #1|}
\newcommand*{\cA}{\mathcal{A}}
\newcommand*{\cB}{\mathcal{B}}
\newcommand*{\cH}{\mathcal{H}}
\newcommand*{\cL}{\mathcal{L}}
\newcommand*{\cS}{\mathcal{S}}
\newcommand*{\cX}{\mathcal{X}}
\newcommand*{\cY}{\mathcal{Y}}
\newcommand{\bef}{\rightsquigarrow} 
\newcommand*{\nbef}{\not\rightsquigarrow}
\newcommand*{\nuparrow}{\hspace{-0.18ex}{\scriptscriptstyle{\not}}\hspace{0.18ex}\uparrow}
\theoremstyle{plain}
\newtheorem{theorem}{Theorem}
\newtheorem{lemma}{Lemma}
\newtheorem{corollary}{Corollary}
\theoremstyle{definition}
\newtheorem{definition}{Definition}
\newcommand*{\wf}[1]{#1}
\newcommand*{\psiu}{\phi}
\begin{document}

\title{A system's wave function is uniquely determined by its
  underlying physical state}

\date{$28\Th$ December 2017}

\author{Roger \surname{Colbeck}}
\email[]{roger.colbeck@york.ac.uk}
\affiliation{Department of Mathematics, University of York, YO10 5DD, UK}
\author{Renato \surname{Renner}}
\email[]{renner@phys.ethz.ch}
\affiliation{Institute for Theoretical Physics, ETH Zurich, 8093
Zurich, Switzerland}

\begin{abstract}
  We address the question of whether the quantum-mechanical wave
  function $\Psi$ of a system is uniquely determined by any complete
  description $\Lambda$ of the system's physical state.  We show that
  this is the case if the latter satisfies a notion of ``free
  choice''. This notion requires that certain experimental
  parameters|those that according to quantum theory can be chosen
  independently of other variables|retain this property in the
  presence of $\Lambda$.  An implication of this result is that, among
  all possible descriptions $\Lambda$ of a system's state compatible
  with free choice, the wave function $\Psi$ is as objective as
  $\Lambda$.
\end{abstract}

\maketitle

\section{Introduction}

The quantum-mechanical wave function, $\wf{\Psi}$, has a clear
operational meaning, specified by the Born rule~\cite{Born1926}. It
asserts that the outcome $X$ of a measurement, defined by a family of
projectors $\{\Pi_x\}$, follows a distribution $P_X$ given by
$P_{X}(x)=\bra{\Psi}\Pi_x\ket{\Psi}$, and hence links the wave
function $\wf{\Psi}$ to observations.  However, the link is
probabilistic: even if $\wf{\Psi}$ is known to arbitrary precision, we
cannot in general predict $X$ with certainty.

In classical physics, such indeterministic predictions are always a
sign of incomplete knowledge.\footnote{For example, when we assign a
  probability distribution $P$ to the outcomes of a die roll, $P$ is
  not an objective property but rather a representation of our
  incomplete knowledge. Indeed, if we had complete knowledge,
  including for instance the precise movement of the thrower's hand,
  the outcome would be deterministic.} This raises the question of
whether the wave function $\wf{\Psi}$ associated to a system
corresponds to an \emph{objective} property of the system, or whether
it should instead be interpreted \emph{subjectively}, i.e., as a
representation of our (incomplete) knowledge about certain underlying
objective attributes. Another alternative is to deny the existence of
the latter, i.e., to give up the idea of an underlying reality
completely.

Despite its long history, no consensus about the interpretation of the
wave function has been reached. A subjective interpretation was, for
instance, supported by the famous argument of Einstein, Podolsky and
Rosen~\cite{EPR1935} (see also~\cite{Einstein1935}) and, more
recently, by information-theoretic
considerations~\cite{Jaynes1990,Caves2002,Spekkens2007}. The opposite
(objective) point of view was taken, for instance, by Schr\"odinger (at
least initially), von Neumann, Dirac, and
Popper~\cite{vonNeumann1955,Dirac1958,Popper1967}.

To turn this debate into a more technical question, one may consider
the following gedankenexperiment: Assume you are provided with a set
of variables $\Lambda$ that are intended to describe the physical
state of a system. Suppose, furthermore, that the set $\Lambda$ is
\emph{complete}, i.e., there is nothing that can be added to $\Lambda$
to increase the accuracy of any predictions about the outcomes of
measurements on the system. If you were now asked to specify the wave
function $\wf{\Psi}$ of the system, would your answer be unique?

If so then $\wf{\Psi}$ is a function of the variables $\Lambda$ and
hence as objective as $\Lambda$.  The model defined by $\Lambda$ would
then be called \emph{$\Psi$-ontic}~\cite{HarSpe2010}. Conversely, the
existence of a complete set of variables $\Lambda$ that does not
determine the wave function $\wf{\Psi}$ would mean that $\wf{\Psi}$
cannot be interpreted as an objective property. $\Lambda$ would then
be called \emph{$\Psi$-epistemic} (see Fig.~\ref{fig_wavefunction}).\footnote{Note that the existence or
  non-existence of $\Psi$-epistemic theories is also relevant in the
  context of simulating quantum systems.  Here $\Lambda$ can be
  thought of as the internal state of a computer performing the
  simulation, and one would ideally like that storing $\Lambda$
  requires significantly fewer resources than would be required to
  store $\Psi$. However, a number of existing results already cast
  doubt on this possibility (see, for
  example,~\cite{Hardy,Montina08,Montina12}).}
  
\begin{figure}
\includegraphics[clip=true,trim=7.7cm 5.9cm 3.5cm 5.2cm,width=0.5\textwidth]{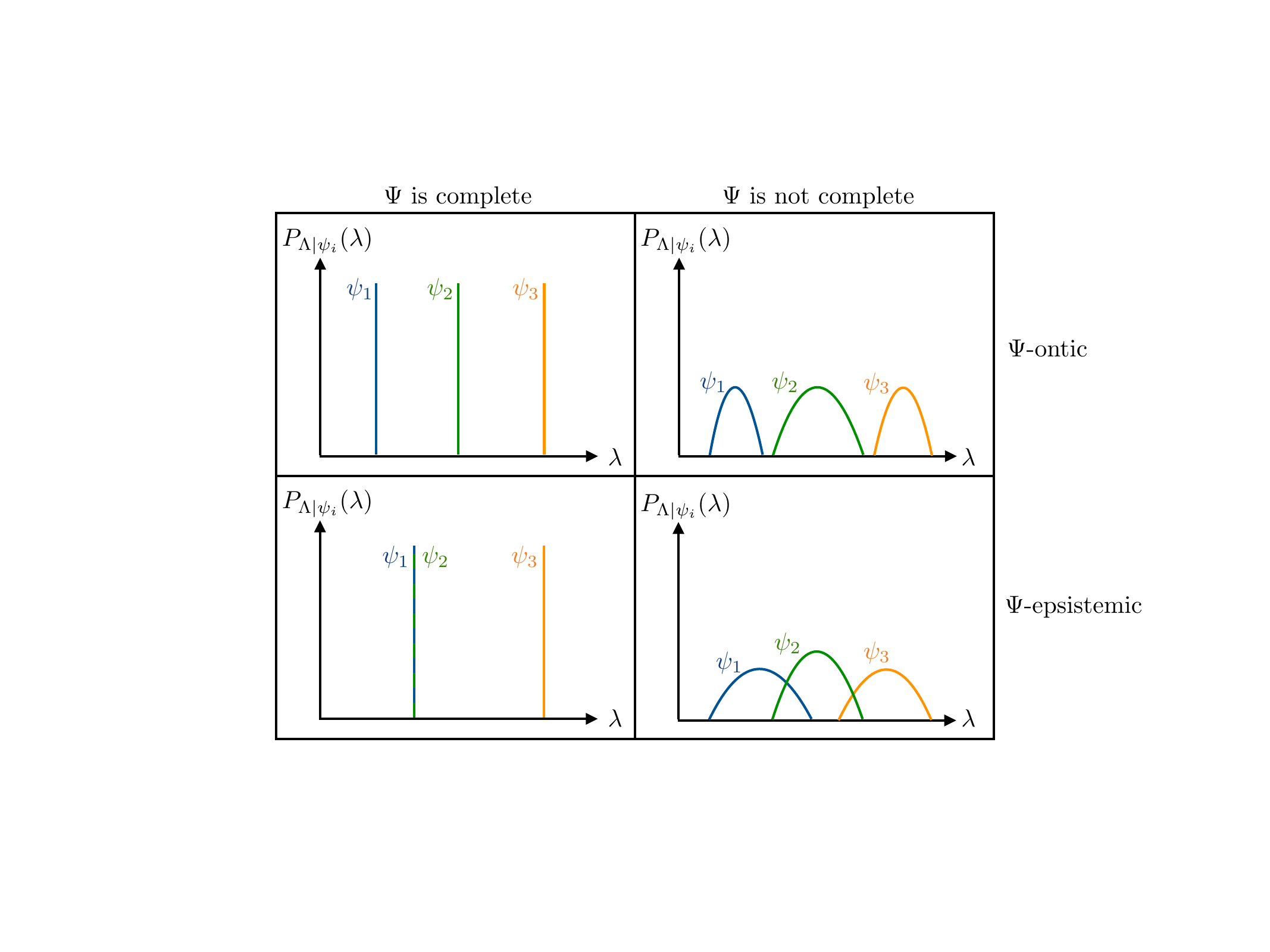}
\caption{\emph{The different possible roles of the wave function
    $\Psi$.} A model that uses a variable $\Lambda$ to describe a
  system's physical state can be either $\Psi$-ontic or
  $\Psi$-epistemic, depending on whether or not the wave function
  $\Psi$ is uniquely determined by $\Lambda$ (which takes values
  denoted by~$\lambda$).  Conversely, the relevant parts of $\Lambda$
  may be determined by $\Psi$, in which case $\Psi$ is complete. Using
  free choice (with respect to an appropriate causal order),
  \cite{ColRen2011} rules out the right column, \cite{ColRen2012}
  rules out the bottom left case, and the present paper (as well
  as~\cite{Pusey2012}, based on different assumptions) rules out the
  bottom row.}
\label{fig_wavefunction}
\end{figure}

In a seminal paper~\cite{Pusey2012}, Pusey, Barrett and Rudolph
showed that any complete model $\Lambda$ is $\wf{\Psi}$-ontic if it
satisfies an assumption, termed ``preparation independence''. It
demands that $\Lambda$ consists of separate variables for each
subsystem, e.g., $\Lambda=(\Lambda_A,\Lambda_B)$ for two subsystems
$S_A$ and $S_B$, and that these are statistically independent, i.e.,
$P_{\Lambda_A\Lambda_B}=P_{\Lambda_A}P_{\Lambda_B}$,
whenever the joint wave function $\wf{\Psi}$ of the total system has
product form, i.e., $\wf{\Psi}=\wf{\Psi_A}\ot\wf{\Psi_B}$.

Here we show that the same conclusion can be reached without imposing
any internal structure on $\Lambda$.  In more detail, our argument
relies on the concept of free choice, which can only be defined with
reference to an ordering, called here a \emph{causal
  order}\footnote{This should not be confused with a \emph{causal structure} as used in e.g.~\cite{Pearl}.}.  More precisely, we
prove that $\Psi$ is a function of any complete set of variables that
are compatible with free choice with respect to the causal order of
Figure~\ref{fig_causal} (see later for more details).  This is stated
as Corollary~\ref{cor_main}.  The free choice assumption used captures
the idea that experimental parameters, e.g., which state to prepare or
which measurement to carry out, can be chosen independently of all
other information (relevant to the experiment), except for information
that is created after the choice is made, e.g., measurement outcomes.
While this notion is implicit in quantum theory, we demand that it
also holds in the presence of $\Lambda$.\footnote{Free choice of
  certain variables is also implied by the preparation independence
  assumption used in~\cite{Pusey2012}, as discussed below.}

The proof of our result is inspired by our earlier
work~\cite{ColRen2012} in which we observed that the wave function
$\wf{\Psi}$ is uniquely determined by any complete set of
variables~$\Lambda$, provided that $\wf{\Psi}$ is itself complete (in
the sense described above). Together with the result
of~\cite{ColRen2011}, in which we showed that $\wf{\Psi}$ is complete,
we can conclude that the wave function $\wf{\Psi}$ is uniquely
determined by $\Lambda$.

The difference in the present work is that we can circumvent one of
the aspects of quantum theory required by the argument
in~\cite{ColRen2011}.  In particular, here we prove that $\wf{\Psi}$
is determined by $\Lambda$ without requiring that any quantum
measurement on a system corresponds to a unitary evolution of an
extended system.  Being based on weaker assumptions, the resulting
no-go theorem is stronger. Furthermore, the argument that the wave
function $\wf{\Psi}$ is complete is quite involved and a beneficial
feature of the present work is that we circumvent
it\footnote{Note, however, that the assumptions used in this work do
  not allow us to conclude that $\wf{\Psi}$ is complete.}.

\section{The Uniqueness Theorem}

Our argument refers to an experimental setup where a particle emitted
by a source decays into two, each of which is directed towards one of
two measurement devices (see Fig.~\ref{fig_setup}). The measurements
that are performed depend on parameters $A$ and $B$, and their
respective outcomes are denoted $X$ and $Y$.

\begin{figure}
\includegraphics[clip=true,trim=5cm 1.9cm 5cm 1cm,width=0.5\textwidth]{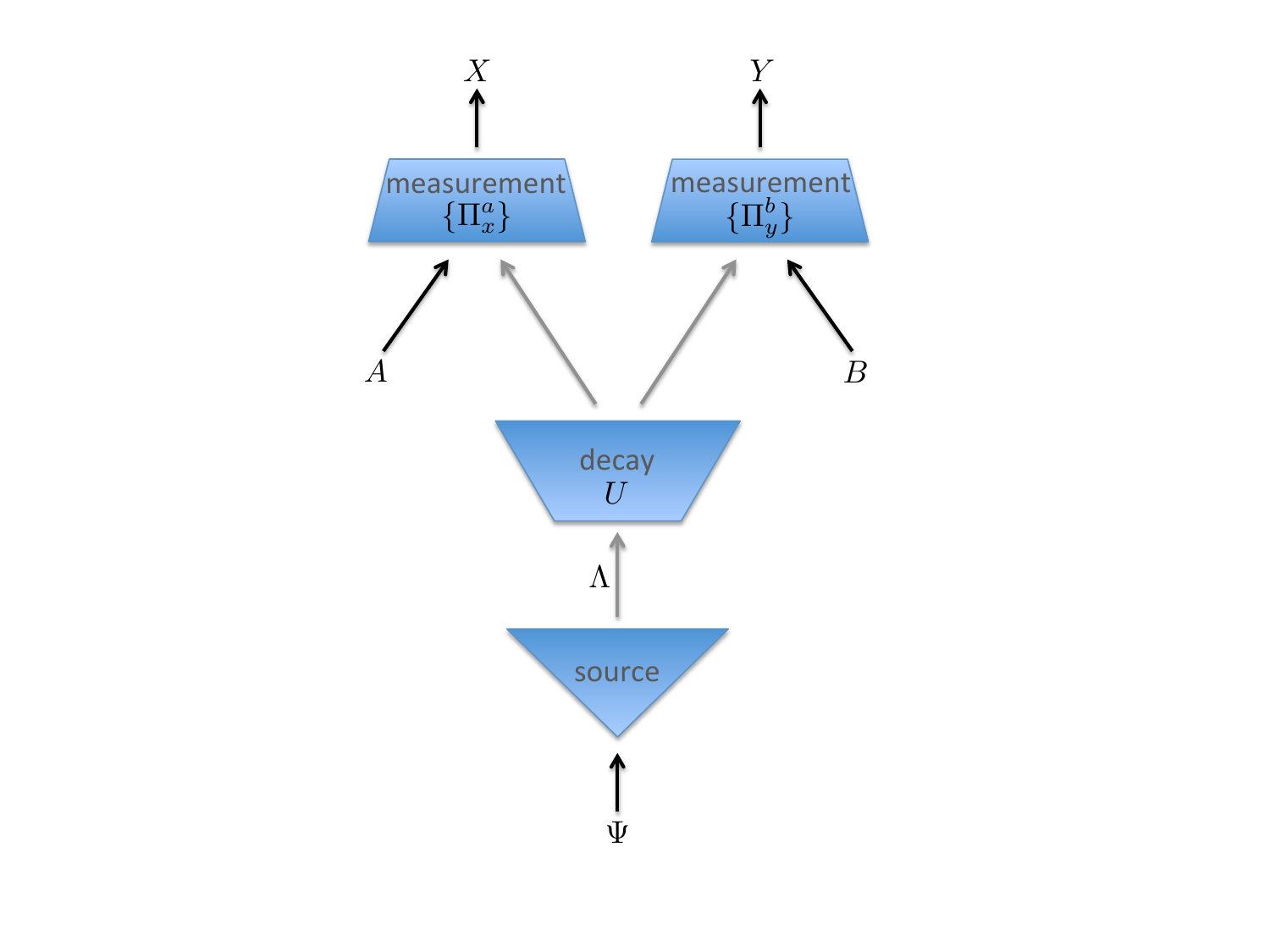}
\caption{\emph{The experimental setup.} The proof of the uniqueness theorem relies on a thought experiment where a source takes as input a description of a wave function $\Psi$ and prepares a particle in a corresponding state (which, in a general model, is described by a variable $\Lambda$). The particle then decays into two parts, which are measured at separate locations. $A$ and $B$ determine the measurements that are applied to the two parts, and $X$ and $Y$ are the respective outcomes.}
\label{fig_setup}
\end{figure}

Quantum theory allows us to make predictions about these outcomes
based on a description of the initial state of the system, the
evolution it undergoes and the measurement settings. For our purposes,
we assume that the quantum state of each particle emitted by the
source is pure, and hence specified by a wave function\footnote{We
  consider it uncontroversial that a mixed state can be thought of as
  a state of knowledge.}.  As we will consider different choices for
this wave function, we model it as a random variable $\Psi$ that takes
as values unit vectors $\wf{\psi}$ in a complex Hilbert space
$\cH$. Furthermore, we take the decay to act like an isometry, denoted
$U$, from $\cH$ to a product space $\cH_A\ot\cH_B$. Finally, for
any choices $a$ and $b$ of the parameters $A$ and $B$, the
measurements are given by families of projectors
$\{\Pi^a_x\}_{x\in\cX}$ and $\{\Pi^b_y\}_{y\in\cY}$ on $\cH_A$ and
$\cH_B$, respectively. The Born rule, applied to this setting, now
asserts that the joint probability distribution of $X$ and $Y$,
conditioned on the relevant parameters, is given by
\begin{align} \label{eq_Born}
  P_{XY|AB\Psi}(x,y|a,b,\psi)=\bra{\psi}U^\dagger(\Pi^a_x\ot\Pi^b_y)U\ket{\psi} \ .
\end{align}

To model the system's ``physical state'', we introduce an additional
random variable $\Lambda$.  We do not impose any structure on
$\Lambda$ (in particular, $\Lambda$ could be a list of values). We
will consider predictions $P_{XY|AB\Lambda}(x,y|a,b,\lambda)$
conditioned on any particular value $\lambda$ of $\Lambda$,
analogously to the predictions based on $\Psi$ according to the Born
rule~\eqref{eq_Born}.

To define the notions of \emph{free choice} and \emph{completeness},
as introduced informally in the introduction, we take as motivation
that any experiment takes place in spacetime and therefore
has a \emph{causal order}\footnote{In previous work we sometimes
called this a \emph{chronological structure}~\cite{CR_book2}.}. 
For example, the measurement setting $A$ is chosen before the
measurement outcome $X$ is obtained. This may be modelled
mathematically by a preorder relation\footnote{A \emph{preorder
    relation} is a binary relation that is reflexive and transitive.},
denoted $\bef$, on the relevant set of random variables.  While our
technical claim does not depend on how the causal order is interpreted
physically, it is intuitive to imagine it being compatible with
relativistic spacetime. In this case, $A \bef X$ would mean that the
spacetime point where $X$ is accessible lies in the future light cone
of the spacetime point where the choice $A$ is made.

\begin{figure}
\includegraphics[width=0.15\textwidth]{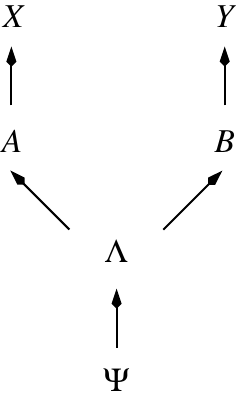}
\caption{\emph{The causal order.} Free choice is only
  well defined if one specifies a causal order, i.e., a preorder
  relation on the set of variables relevant to the experiment.
  The causal order we use is motivated by the arrangement of variables
  in the experiment depicted by Fig.~\ref{fig_setup} in relativistic
  space time.}
\label{fig_causal}
\end{figure}

For our argument we consider the causal order defined by
the transitive completion of the relations
\begin{align} \label{eq_causal} \Psi \bef \Lambda, \quad \Lambda \bef
  A, \quad \Lambda \bef B, \quad A \bef X, \quad B \bef Y
\end{align}
(cf.\ Fig.~\ref{fig_causal}). This reflects, for instance, that $\Psi$
is chosen at the very beginning of the experiment, and that $A$ and
$B$ are chosen later, right before the two measurements are carried
out. Note, furthermore, that $A \nbef Y$ and $B \nbef X$. With the
aforementioned interpretation of the relation in relativistic
spacetime, this would mean that the two measurements are carried out
at spacelike separation.

Using the notion of a causal order, we can now specify
mathematically what we mean by \emph{free choices} and by
\emph{completeness}.  We note that the two definitions below should be
understood as necessary (but not necessarily sufficient) conditions
characterising these concepts. Since they appear in the assumptions of
our main theorem, our result also applies to any more restrictive
definitions. We remark furthermore that the definitions are generic,
i.e., they can be applied to any set of variables equipped with a
preorder relation.\footnote{They are therefore different from notions
  used commonly in the context of Bell-type experiments, such as
  \emph{parameter independence} and \emph{outcome independence}. These
  refer explicitly to measurement choices and outcomes, whereas no
  such distinction is necessary for the definitions used here.}

\begin{definition} \label{def_free} When we say that a variable $A$ is
  a \emph{free choice from a set $\cA$ (w.r.t.\ a causal order)} this
  means that the support of $P_A$ contains $\cA$ and that
  $P_{A|A_{\nuparrow}} = P_A$ where $A_{\nuparrow}$ is the set of all
  random variables $Z$ (within the causal order) such that
  ${A \nbef Z}$.
\end{definition}

In other words, a choice $A$ is free if it is uncorrelated with any
other variables, except those that lie in the future of $A$ in the
causal order. For a further discussion and motivation of
this notion we refer to Bell's work~\cite{Bell2004} as well as
to~\cite{ColRen2013}.

Crucially, we note that Definition~\ref{def_free} is compatible with
the usual understanding of free choices within quantum theory. For
example, if we consider our experimental setup (cf.\
Fig.~\ref{fig_setup}) in ordinary quantum theory (i.e., where there is
no $\Lambda$), the initial state $\Psi$ as well as the measurement
settings $A$ and $B$ can be taken to be free choices w.r.t.~$\Psi \bef
A,\ \Psi \bef B,\ A \bef X,\ B \bef Y$ (which is the causal order
defined by Eq.~\ref{eq_causal} with $\Lambda$ removed).

\begin{definition}
  When we say that a variable $\Lambda$ is \emph{complete (w.r.t.\ a
    causal order)} this means that\footnote{In other words,
    $\Lambda_\downarrow\rightarrow\Lambda\rightarrow\Lambda_\uparrow$
    is a Markov chain. }
  \begin{align*}
    P_{\Lambda_\uparrow|\Lambda}=P_{\Lambda_\uparrow|\Lambda\Lambda_\downarrow} 
  \end{align*}
  where $\Lambda_\uparrow$ and $\Lambda_{\downarrow}$ denote the sets
  of random variables $Z$ (within the causal order) such
  that $\Lambda \bef Z$ and $Z \bef \Lambda$, respectively.
\end{definition}

Completeness of $\Lambda$ thus implies that predictions based on
$\Lambda$ about future values $\Lambda_\uparrow$ cannot be improved by
taking into account additional information $\Lambda_{\downarrow}$
available in the past.\footnote{Using statistics terminology, one may
  also say that $\Lambda$ is \emph{sufficient}  for $\Lambda_{\uparrow}$ given data $\Lambda_{\downarrow}$.}  Recall that this is
meant as a necessary criterion for completeness and that our
conclusions hold for any more restrictive definition. For example, one
may replace the set $\Lambda_{\uparrow}$ by the set of all values that
are not in the past of $\Lambda$.

We are now ready to formulate our main result as a theorem. Note that,
the assumptions of the theorem as well as its claim correspond to
properties of the joint probability distribution of $X$, $Y$, $A$,
$B$, $\Psi$ and $\Lambda$.

\begin{theorem} \label{thm_main} Let $\Lambda$ and $\Psi$ be random
  variables and assume that the support of $\Psi$ contains two wave
  functions, $\wf{\psi}$ and $\wf{\psi'}$, with
  $|\braket{\psi}{\psi'}|<1$. If for any isometry $U$ and
  measurements $\{\Pi^a_x\}_{x}$ and $\{\Pi^b_y\}_{y}$, parameterised
  by $a\in\cA$ and $b\in\cB$, there exist random variables $A$,
  $B$, $X$ and $Y$ such that
  \begin{enumerate}  
  \item \label{it_quantum} $P_{XY|AB\Psi}$ satisfies the Born rule
    \eqref{eq_Born}; 
  \item \label{it_free} $A$ and $B$ are free choices from $\cA$ and $\cB$,
    w.r.t.~\eqref{eq_causal};
  \item \label{it_completeness} $\Lambda$ is complete w.r.t.~\eqref{eq_causal} 
  \end{enumerate}
  then there exists a subset $\cL$ of the range of $\Lambda$
  such that $P_{\Lambda|\Psi}(\cL|\psi) = 1$ and
  $P_{\Lambda|\Psi}(\cL|\psi') = 0$.
\end{theorem}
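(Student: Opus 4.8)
The plan is to convert the three hypotheses into constraints on the joint distribution of $(X,Y,A,B,\Psi,\Lambda)$ and then reduce the claim to a rigidity statement about decompositions of quantum correlations into no-signaling boxes. First I would extract from completeness (Assumption~\ref{it_completeness}) that $\Psi\to\Lambda\to(A,B,X,Y)$ is a Markov chain, since $\Psi\in\Lambda_\downarrow$ while $A,B,X,Y\in\Lambda_\uparrow$ (using $\Lambda\bef A\bef X$ and $\Lambda\bef B\bef Y$). Hence the outcome distribution conditioned on $\Lambda$ does not depend on $\Psi$, and I may define a response function $Q_\lambda(x,y|a,b):=P_{XY|AB\Lambda}(x,y|a,b,\lambda)$. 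Next I would invoke free choice (Assumption~\ref{it_free}): since $\Lambda,\Psi,B,Y\in A_{\nuparrow}$ and $\Lambda,\Psi,A,X\in B_{\nuparrow}$, the settings are independent of $(\Psi,\Lambda)$, giving $P_{\Lambda|AB\Psi}=P_{\Lambda|\Psi}$, and moreover each $Q_\lambda$ is no-signaling (e.g.\ $\sum_x Q_\lambda(x,y|a,b)=P_{Y|B\Lambda}(y|b,\lambda)$ is independent of $a$, and symmetrically in $b$). Combining these with the Born rule (Assumption~\ref{it_quantum}) yields, for every setting pair,
\begin{equation}
\bra{\psi}U^\dagger(\Pi^a_x\ot\Pi^b_y)U\ket{\psi}=\sum_\lambda P_{\Lambda|\Psi}(\lambda|\psi)\,Q_\lambda(x,y|a,b),
\end{equation}
and the analogous identity for $\psi'$, with the same boxes $Q_\lambda$ but the weights $P_{\Lambda|\Psi}(\cdot|\psi')$.

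At this point the statement to prove has become purely about correlations: I must show that one can choose the (free) isometry $U$ and measurement families so that no single no-signaling box $Q_\lambda$ appears with positive weight in both decompositions, i.e.\ $\mathrm{supp}\,P_{\Lambda|\Psi}(\cdot|\psi)$ and $\mathrm{supp}\,P_{\Lambda|\Psi}(\cdot|\psi')$ are forced to be disjoint; taking $\cL$ to be the former then gives the claim. The engine for this is a chained correlation argument in the spirit of our earlier ``maximal randomness'' result: by choosing $U$ to map the two-dimensional span of $\ket{\psi},\ket{\psi'}$ into a bipartite space and taking $\cA,\cB$ to be large families of slightly tilted measurement bases, the quantum predictions are driven towards the no-signaling boundary, which rigidly constrains the admissible $Q_\lambda$ and renders any box consistent with $\psi$ incompatible with $\psi'$.

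I expect the construction and this rigidity step to be the main obstacle, and the difficulty is twofold. First, one must design $U$ and the measurement families directly from the overlap $|\braket{\psi}{\psi'}|<1$ so that the induced correlations are sufficiently extremal; since the two states are non-orthogonal, no finite measurement separates them outright, so the distinguishing power has to be accumulated along the chain. Second, and most delicate, the theorem demands the \emph{exact} conclusion $P_{\Lambda|\Psi}(\cL|\psi)=1$ and $P_{\Lambda|\Psi}(\cL|\psi')=0$ rather than an approximate one; obtaining probability exactly zero requires letting the number of settings tend to infinity and arguing that the rigidity survives in the limit, with a continuity or compactness argument ruling out a small but nonzero overlap of the two $\Lambda$-supports. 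The remaining bookkeeping---verifying the Markov and no-signaling relations from the preorder \eqref{eq_causal}---is routine given Definition~\ref{def_free} and the completeness definition.
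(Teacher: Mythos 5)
Your reduction of the hypotheses is correct and coincides exactly with the paper's own first step: completeness gives the Markov chain $\Psi \to \Lambda \to (A,B,X,Y)$, so the response functions $Q_\lambda$ do not depend on $\Psi$; free choice gives $P_{\Lambda|AB\Psi}=P_{\Lambda|\Psi}$ together with the joint no-signaling conditions $P_{X\Lambda|AB\Psi}=P_{X\Lambda|A\Psi}$ and $P_{Y\Lambda|AB\Psi}=P_{Y\Lambda|B\Psi}$; and the Born rule then writes the correlations of $\psi$ and $\psi'$ as mixtures of the \emph{same} boxes with weights $P_{\Lambda|\Psi}(\cdot|\psi)$ and $P_{\Lambda|\Psi}(\cdot|\psi')$. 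The gap is that everything after this point --- the part you explicitly defer as ``the main obstacle'' --- is the actual content of the theorem, and the symmetric rigidity picture you sketch (drive the predictions to the no-signaling boundary so that ``any box consistent with $\psi$ is incompatible with $\psi'$'') cannot be made to work as stated. The two decompositions share their boxes, and near-extremality of $\psi$'s correlations alone does not preclude a common support: you need a concrete \emph{statistic} that is pinned to one value almost surely under $P_{\Lambda|\Psi}(\cdot|\psi)$ and to a different value almost surely under $P_{\Lambda|\Psi}(\cdot|\psi')$. Moreover the mechanism is necessarily asymmetric: the image of $\psi'$ is not maximally entangled, its correlations are nowhere near the no-signaling boundary, and no chained-Bell rigidity applies to it at all.

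The missing idea is the choice of \emph{target states}, not just of measurements. The paper (Lemmas~\ref{lem_overlap} and~\ref{lem_isometry}) picks the isometry $U$ so that $U\psi$ is the maximally entangled state $\frac{1}{\sqrt{d}}\sum_{j=0}^{d-1}\ket{j}\ket{j}$ while $U\psi'$ has the same overlap but \emph{zero amplitude on $\ket{k}\ket{k}$} for a fixed $k<d$. Then (i) the chained-Bell bound~\eqref{eq_Inbound} together with Lemma~\ref{lem_extended} pins the single statistic $P_{X|A\Lambda\Psi}(k|0,\lambda,\psi)$ to $\frac{1}{d}$ for $P_{\Lambda|\psi}$-almost-every $\lambda$ --- and the exactness comes not from a compactness or continuity argument, as you propose, but from noting that this quantity is independent of $n$ (the setting $a=0$ is the same computational-basis measurement in every family) while the bound $\frac{d\pi^2}{12n}$ vanishes; (ii) completeness transfers the value $\frac{1}{d}$ to conditioning on $\psi'$; (iii) the Born rule forces $P_{X|A\Psi}(k|0,\psi')=0$, and since this is the $P_{\Lambda|\Psi}(\cdot|\psi')$-average of a non-negative quantity equal to $\frac{1}{d}>0$ on $\cL$, it follows that $P_{\Lambda|\Psi}(\cL|\psi')=0$. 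Without the ``hole at $k$'' in $U\psi'$ and this pinned marginal at a single setting, your plan has no contradiction to run, so as written it is a correct skeleton of the paper's argument with its load-bearing step absent.
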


The theorem asserts that, assuming validity of the Born rule and
freedom of choice, the values taken by any complete variable $\Lambda$
are different for different choices of the wave function $\Psi$. This
implies that $\Psi$ is indeed a function of $\Lambda$.

To formulate this implication as a technical statement, we consider an
arbitrary countable\footnote{The restriction to a countable set is due
  to our proof technique. We leave it as an open problem to determine
  whether this restriction is necessary.} set $\cS$ of wave functions
such that $|\braket{\psi}{\psi'}| < 1$ for any distinct elements
$\psi, \psi' \in \cS$.

\begin{corollary} \label{cor_main} Let $\Lambda$ and $\Psi$ be random
  variables with $\Psi$ taking values from the set $\cS$ of wave
  functions.  If the conditions of Theorem~\ref{thm_main} are
  satisfied then there exists a function $f$ such that $\Psi =
  f(\Lambda)$ holds almost surely.
\end{corollary}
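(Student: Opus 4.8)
The plan is to derive Corollary~\ref{cor_main} from Theorem~\ref{thm_main} by a pairwise-separation argument combined with a countable union bound. The theorem tells us that for any two distinct wave functions $\psi, \psi' \in \cS$ (which satisfy $|\braket{\psi}{\psi'}| < 1$ by assumption on $\cS$) there is a set $\cL_{\psi,\psi'}$ in the range of $\Lambda$ that carries all the $\Lambda$-weight under $\psi$ and none under $\psi'$. The goal is to upgrade this pairwise distinguishability into a single global function $f$ with $\Psi = f(\Lambda)$ almost surely.

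First I would fix, for each distinct ordered pair $(\psi,\psi')$ with $\psi,\psi' \in \cS$, the set $\cL_{\psi,\psi'}$ guaranteed by Theorem~\ref{thm_main}, so that $P_{\Lambda|\Psi}(\cL_{\psi,\psi'}|\psi)=1$ and $P_{\Lambda|\Psi}(\cL_{\psi,\psi'}|\psi')=0$. For a fixed $\psi$, I would then form the intersection $\cL_\psi := \bigcap_{\psi' \in \cS \setminus \{\psi\}} \cL_{\psi,\psi'}$. Because $\cS$ is countable, this is a countable intersection; since each $\cL_{\psi,\psi'}$ has full measure under $\psi$, the countable intersection $\cL_\psi$ still satisfies $P_{\Lambda|\Psi}(\cL_\psi|\psi)=1$ (the complement is a countable union of null sets, hence null). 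By construction $\cL_\psi$ is disjoint from $\cL_{\psi'}$ for $\psi \neq \psi'$: indeed $\cL_{\psi'} \subseteq \cL_{\psi',\psi}$, which is disjoint from $\cL_{\psi,\psi'} \supseteq \cL_\psi$, since the former has zero and the latter full $\psi$-measure on overlapping regions — more directly, a value $\lambda \in \cL_\psi$ lies in $\cL_{\psi,\psi'}$, whereas $\lambda \in \cL_{\psi'}$ would force $\lambda \in \cL_{\psi',\psi}$, and these two sets are disjoint by the defining property applied to the pair $(\psi,\psi')$.

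Having produced a family $\{\cL_\psi\}_{\psi \in \cS}$ of pairwise-disjoint subsets of the range of $\Lambda$, each of full measure under the corresponding $\psi$, I would define $f$ by setting $f(\lambda) = \psi$ whenever $\lambda \in \cL_\psi$ (this is unambiguous by disjointness), and assigning $f$ an arbitrary fixed value on the remainder. Then for each $\psi$ in the support of $\Psi$ we have $P_{\Lambda|\Psi}(\{\lambda : f(\lambda)=\psi\}|\psi) \geq P_{\Lambda|\Psi}(\cL_\psi|\psi) = 1$, so $f(\Lambda)=\Psi$ holds conditionally almost surely for every value of $\Psi$. Averaging over $\Psi$ (again a countable sum, since $\Psi$ ranges over $\cS$) yields $P[\Psi = f(\Lambda)] = 1$, which is the claim.

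The main obstacle is handling the countability correctly: the disjointification via countable intersections is where the restriction to a countable $\cS$ is genuinely used, since full measure is preserved only under countable intersections, and the final averaging over $\Psi$ to pass from conditional to unconditional almost-sure equality also relies on $\cS$ being countable. One subtlety to check carefully is that the intersection $\cL_\psi$ remains measurable and that the pairwise disjointness indeed follows cleanly from the two defining conditions of Theorem~\ref{thm_main}; the rest is a routine union bound.
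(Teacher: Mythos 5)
Your overall strategy---pairwise sets $\cL_{\psi,\psi'}$ from Theorem~\ref{thm_main}, countable intersections $\cL_\psi$, then reading off $f$---is exactly the paper's, but your disjointness step contains a genuine flaw. You claim that $\cL_{\psi,\psi'}$ and $\cL_{\psi',\psi}$ ``are disjoint by the defining property,'' and hence that the sets $\cL_\psi$ are pairwise disjoint. This does not follow: the theorem's conclusion is purely measure-theoretic. It gives $P_{\Lambda|\Psi}(\cL_{\psi,\psi'}|\psi)=1$, $P_{\Lambda|\Psi}(\cL_{\psi,\psi'}|\psi')=0$, $P_{\Lambda|\Psi}(\cL_{\psi',\psi}|\psi')=1$ and $P_{\Lambda|\Psi}(\cL_{\psi',\psi}|\psi)=0$, which only forces the intersection $\cL_{\psi,\psi'}\cap\cL_{\psi',\psi}$ to be a null set under both conditional measures, not to be empty. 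Concretely, if $P_{\Lambda|\Psi}(\cdot|\psi)$ is uniform on $[0,\tfrac12]$ and $P_{\Lambda|\Psi}(\cdot|\psi')$ is uniform on $[\tfrac12,1]$, the existence claim of the theorem is satisfied by $\cL_{\psi,\psi'}=[0,\tfrac12]\cup\{\tfrac34\}$ and $\cL_{\psi',\psi}=[\tfrac12,1]$, which intersect. Consequently your $f$ may be ill-defined on the overlap, and ``unambiguous by disjointness'' rests on a false premise.

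The gap is easily repaired, and the repair is what the paper does: rather than asserting disjointness of the $\cL_\psi$, define the preimages as $f^{-1}(\psi) = \cL_\psi \setminus \bigl(\bigcup_{\psi'\in\cS\setminus\{\psi\}}\cL_{\psi'}\bigr)$, which are disjoint by construction; then note that each removed set satisfies $P_{\Lambda|\Psi}(\cL_{\psi'}|\psi)\leq P_{\Lambda|\Psi}(\cL_{\psi',\psi}|\psi)=0$, so by countability of $\cS$ the removal costs nothing and $P_{\Lambda|\Psi}(f^{-1}(\psi)|\psi)=1$ still holds. (Alternatively you could patch your own argument by choosing $\cL_{\psi',\psi}:=\cL_{\psi,\psi'}^{\,c}$ for each unordered pair, or by breaking ties via a fixed enumeration of $\cS$; either way, disjointness must be engineered, not inferred from the measure statements.) The remainder of your argument---the countable intersection preserving full measure, and the final averaging over the countably many values of $\Psi$---is correct and matches the paper.
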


The proof of this corollary is given in
Appendix~\ref{app_corollaryproof}.

\section{Proof of the Uniqueness Theorem}

The argument relies on specific wave functions, which depend on
parameters $d, k \in \mathbb{N}$ and $\xi \in [0, 1]$, with $k <
d$. They are defined as unit vectors on a product space ${\cH_A
  \ot \cH_B}$, where $\cH_A$ and $\cH_B$ are $(d+1)$-dimensional
Hilbert spaces equipped with an orthonormal basis
$\{\ket{j}\}_{j=0}^{d}$,\footnote{We use here the abbreviation
  $\ket{j} \ket{j}$ for $\ket{j} \ot \ket{j}$.}
\begin{align}
\wf{\psiu} &
=\frac{1}{\sqrt{d}}\sum_{j=0}^{d-1}\ket{j}\ket{j} \label{eq_psi} \\
    \wf{\psiu'} &=\frac{1}{\sqrt{k}} \Bigl( \xi \ket{0}\ket{0}  +
      \sum_{j=1}^{k-1}\ket{j}\ket{j} + \sqrt{1-\xi^2}
      \ket{d}\ket{d} \Bigr)\, . \label{eq_psip}
\end{align}

\begin{lemma}\label{lem_overlap}
  For any $0\leq\alpha<1$ there exist $k,d\in\mathbb{N}$ with $k<d$
  and $\xi\in[0,1]$ such that the vectors $\wf{\psiu}$ and
  $\wf{\psiu'}$ defined by~\eqref{eq_psi} and~\eqref{eq_psip} have
  overlap $\braket{\psiu}{\psiu'}=\alpha$.
\end{lemma}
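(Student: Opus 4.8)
The plan is to reduce the statement to an explicit single-variable calculation followed by an elementary covering argument. First I would compute the overlap directly. Since the product states $\ket{j}\ket{j}$ are orthonormal, and $\wf{\psiu}$ is supported on $j \in \{0, \dots, d-1\}$ while $\wf{\psiu'}$ is supported on $j \in \{0, \dots, k-1\} \cup \{d\}$ with $k < d$, only the indices $0, \dots, k-1$ contribute to $\braket{\psiu}{\psiu'}$; in particular the $\ket{d}\ket{d}$ component of $\wf{\psiu'}$ is orthogonal to every term of $\wf{\psiu}$ because $d > d-1$. Collecting the surviving terms gives $\braket{\psiu}{\psiu'} = (\xi + k - 1)/\sqrt{dk}$, which is real, nonnegative, and continuous and strictly increasing in $\xi$ on $[0,1]$.

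Next I would fix $d$ and analyse which overlaps are attainable. For each $k \in \{1, \dots, d-1\}$, letting $\xi$ run over $[0,1]$ and using the intermediate value theorem, the overlap takes every value in the closed interval $I_k = [\,(k-1)/\sqrt{dk},\ \sqrt{k/d}\,]$ determined by the endpoints $\xi=0$ and $\xi=1$. The key step --- and the only part that is not a direct substitution --- is to verify that these intervals cover a whole interval with no gaps as $k$ increases. This amounts to showing that the left endpoint of $I_{k+1}$ never exceeds the right endpoint of $I_k$, i.e.\ $k/\sqrt{d(k+1)} \le \sqrt{k/d}$, which after cancelling reduces to $\sqrt{k} \le \sqrt{k+1}$ and hence holds for all $k$. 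Since $I_1$ starts at $0$ and $I_{d-1}$ ends at $\sqrt{(d-1)/d}$, I conclude that $\bigcup_{k=1}^{d-1} I_k = [0, \sqrt{(d-1)/d}]$.

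Finally, given any $\alpha \in [0,1)$, I would pick $d$ large enough that $\sqrt{(d-1)/d} \ge \alpha$; this is possible since the right-hand side tends to $1$, and any $d \ge 1/(1-\alpha^2)$ (with $d \ge 2$) suffices. For such $d$ the value $\alpha$ lies in some $I_k$, which fixes an admissible $k < d$, and the required $\xi \in [0,1]$ is then obtained either from the intermediate value theorem or explicitly as $\xi = \alpha\sqrt{dk} - (k-1)$. I expect the gap-free covering in the second paragraph to be the only place demanding any genuine argument; the rest is routine computation and a limit.
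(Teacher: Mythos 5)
Your proposal is correct, and it differs from the paper's proof mainly in how $k$ is located. Both arguments rest on the same two ingredients: the overlap formula $\braket{\psiu}{\psiu'} = (\xi + k - 1)/\sqrt{dk}$ and the threshold $d \ge 1/(1-\alpha^2)$. The paper simply exhibits the answer: it sets $k = \lceil \alpha^2 d\rceil$ and $\xi = \alpha\sqrt{kd} - k + 1$, then verifies $\xi \in [0,1]$ and $k < d$ (with $\alpha = 0$ handled separately via $k=1$, $d=2$, $\xi=0$). You instead prove attainability by showing that the intervals $I_k = \bigl[(k-1)/\sqrt{dk},\, \sqrt{k/d}\bigr]$ overlap consecutively and cover $\bigl[0, \sqrt{(d-1)/d}\bigr]$, and then invoke the intermediate value theorem. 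These are really the same argument viewed from opposite ends: your covering argument explains where the paper's seemingly unmotivated choice comes from (the smallest $k$ with $\sqrt{k/d} \ge \alpha$ is exactly $\lceil \alpha^2 d\rceil$, and your explicit $\xi = \alpha\sqrt{dk} - (k-1)$ coincides with theirs), while the paper's direct verification is shorter and dispenses with the IVT, which is in any case avoidable since the dependence on $\xi$ is affine and invertible in closed form, as you note yourself. A minor additional benefit of your version is that $\alpha = 0$ is handled uniformly rather than as a special case.
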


\begin{proof}
  If $\alpha=0$, set $k=1$, $d=2$ and $\xi=0$. Otherwise, set $d\geq
  1/(1-\alpha^2)$, $k=\lceil\alpha^2d\rceil$ and $\xi=\alpha\sqrt{k d}-k+1$, so that $\xi\in [0,1]$ and 
  $\braket{\psiu}{\psiu'}=\alpha$.  Furthermore, the choice of $d$
  ensures that $\alpha^2 d+1\leq d$, which implies $k<d$.
\end{proof}

For any $n\in\mathbb{N}$, we consider projective measurements
$\{\Pi^a_x\}_{x \in \cX_d}$ and $\{\Pi^b_y\}_{y \in \cX_d}$ on $\cH_A$
and $\cH_B$, parameterised by
$a\in\cA_n\equiv\{0,2,4,\ldots,{2n-2}\}$ and
$b\in\cB_n\equiv\{1,3,5,\ldots,{2n-1}\}$, and with outcomes in
$\cX_d\equiv\{0,\ldots,{d}\}$. 

The outcomes $X$ and $Y$ will generally be correlated. To quantify
these correlations, we define\footnote{Note that the first sum
  corresponds to the probability that ${X \oplus 1 = Y}$, conditioned
  on $A=0$ and $B = 2n-1$. The terms in the second sum can be
  interpreted analogously. }
{\setlength{\jot}{-2pt}
\begin{multline*} 
I_{n,d}(P_{XY|AB})  \equiv 2n - \sum_{x=0}^{d-1} P_{XY|AB}(x, x\oplus 1|0,
2n-1) \\ -\!\!\!\! \sum_{\substack{a,b \\ |a-b|=1}}
\sum_{x=0}^{d-1} P_{XY|AB}(x, x|a, b) .
\end{multline*}
}

We set $\Pi^a_d=\Pi^b_d=\proj{d}$. For $x, y \in \{0, \ldots, d-1\}$,
the remaining projectors are chosen such that the value of the
quantity $I_{n,d}(P_{XY|AB})$ predicted by the Born rule when these
measurements are applied to the state $\psiu$ defined
by~\eqref{eq_psi} can be made arbitrarily small for large enough $n$.
More precisely they are chosen such that for
$P_{XY|AB}(x,y|a,b) = \bra{\psiu} {\Pi^a_x \ot \Pi^b_y}
\ket{\psiu}$ we have
\begin{align} \label{eq_Inbound}
  I_{n,d}(P_{XY|AB})\leq\frac{\pi^2}{6 n}   .
\end{align}
[For the details of how to choose the projectors and the derivation of
this bound, see Appendix~\ref{app_QM}.]

The next lemma shows that $I_{n,d}$ gives an upper bound on the
distance of the distribution $P_{X|A\Lambda}$ from a uniform
distribution over $\{0, \ldots, d-1\}$. The bound holds for any random
variable $\Lambda$, provided the joint distribution $P_{XY\Lambda|AB}$
satisfies certain conditions.

\begin{lemma}\label{lem_extended}
  Let $P_{XY AB \Lambda}$ be a distribution that satisfies
  $P_{X\Lambda|AB}=P_{X\Lambda|A}$, $P_{Y\Lambda|AB}=P_{Y\Lambda|B}$
  and $P_{A B \Lambda} = P_AP_BP_{\Lambda}$ with
  $\mathrm{supp}(P_A) \supseteq \cA_n$ and $\mathrm{supp}(P_B)
  \supseteq \cB_n$. Then
  \begin{align*}
   \int \mathrm{d}P_{\Lambda}(\lambda) \sum_{x=0}^{d-1} \bigl|P_{X|A\Lambda}(x|0,\lambda)-\frac{1}{d} \bigr| 
    \leq\frac{d}{2}I_{n,d}(P_{XY|AB}) \, .
  \end{align*}
\end{lemma}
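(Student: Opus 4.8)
The plan is to reduce the statement to a claim about a single value of $\Lambda$ and then to show that the conditional outcome distribution for setting $A=0$ is nearly uniform by a chaining argument around the cyclic family of settings $0,1,\dots,2n-1$. First I would use the independence $P_{AB\Lambda}=P_A\times P_B\times P_\Lambda$ to write $I_{n,d}(P_{XY|AB})=\int \mathrm{d}P_\Lambda(\lambda)\, I^\lambda$, where $I^\lambda:=I_{n,d}(P_{XY|AB,\Lambda=\lambda})$ is the same functional evaluated on the distributions conditioned on $\Lambda=\lambda$; this is legitimate because $\Lambda$ is uncorrelated with $(A,B)$. It then suffices to prove the pointwise bound $\sum_{x=0}^{d-1}\bigl|P_{X|A\Lambda}(x|0,\lambda)-\tfrac1d\bigr|\le \tfrac d2 I^\lambda$ for $P_\Lambda$-almost every $\lambda$ and integrate. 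The no-signalling conditions $P_{X\Lambda|AB}=P_{X\Lambda|A}$ and $P_{Y\Lambda|AB}=P_{Y\Lambda|B}$, together with $\Lambda\perp(A,B)$, guarantee that for each setting $c$ the conditional outcome distribution $Q^c_\lambda$ (namely $P_{X|A\Lambda}(\cdot|c,\lambda)$ for even $c$ and $P_{Y|B\Lambda}(\cdot|c,\lambda)$ for odd $c$) is well defined independently of the other party's setting, so that the same $Q^c_\lambda$ can be reused consistently in every pair in which setting $c$ occurs; the support hypotheses ensure all these conditionals exist.

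Second I would run the chaining. For neighbouring settings $(c,c+1)$ the joint conditional distribution couples the two outcomes, and the elementary estimate ``total variation distance $\le$ probability of disagreement'' gives $\sum_{x=0}^{d-1}\bigl|Q^c_\lambda(x)-Q^{c+1}_\lambda(x)\bigr|\le 2\varepsilon_{c,c+1}$, where $\varepsilon_{c,c+1}$ is the corresponding term of $I^\lambda$ (it even over-counts, since it also charges the event that both outcomes equal $d$). Combining these with the triangle inequality from $c=0$ up to $c=2n-1$ and closing the loop with the wrap-around term, in which the coupling compares $Q^0_\lambda$ with the cyclic shift of $Q^{2n-1}_\lambda$, yields $\sum_{x=0}^{d-1}\bigl|Q^0_\lambda(x)-Q^0_\lambda(x\ominus1)\bigr|\le 2 I^\lambda$. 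In words, the hypotheses force $Q^0_\lambda$ to be almost invariant under the cyclic shift $S\colon x\mapsto x\oplus1$.

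Third I would convert approximate shift-invariance into near-uniformity. Iterating gives $\sum_{x}\bigl|Q^0_\lambda(x)-Q^0_\lambda(x\ominus m)\bigr|\le 2\min(m,d-m)\,I^\lambda$; averaging over $m=0,\dots,d-1$ then compares $Q^0_\lambda$ with its cyclic mean $\tfrac1d\sum_{m}S^m Q^0_\lambda$, and a discrete Poincar\'e/isoperimetric inequality on the cycle $\mathbb{Z}_d$ (whose extremal case is a two-valued step function, giving the sharp constant $d/4$) produces the factor $\tfrac d2$, i.e.\ the distance of $Q^0_\lambda$ from its mean is at most $\tfrac d2 I^\lambda$.

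The main obstacle is the exceptional outcome $x=d$, which is a fixed point of $S$ and is excluded from the sums defining $I_{n,d}$. The averaging in the third step naturally compares $Q^0_\lambda$ with the \emph{conditional} mean $\tfrac1d\sum_{x<d}Q^0_\lambda(x)$ rather than with $\tfrac1d$, and the difference is precisely the mass $Q^0_\lambda(d)$ that setting $0$ places on the exceptional outcome. Closing this gap is the delicate point: one must show that this mass is absorbed by the slack in the $\varepsilon_{c,c+1}$ (which, as noted, over-count disagreements by exactly the ``both outcomes equal $d$'' events), so that the final constant is exactly $d/2$ rather than $d/2+1$. I expect this bookkeeping around outcome $d$, rather than the chaining or the isoperimetric estimate, to be where the real care is needed.
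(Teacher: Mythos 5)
Your outline is, in its first three steps, exactly the paper's own proof: the decomposition $I_{n,d}(P_{XY|AB})=\int\mathrm{d}P_{\Lambda}(\lambda)\,I_{n,d}(P_{XY|AB\lambda})$ is how the paper begins; your ``total variation $\leq$ probability of disagreement'' estimate is the paper's Lemma~\ref{lem_Dbound}, combined (as in the paper) with the no-signalling conditions and the triangle inequality to chain around the $2n$ setting pairs; and your averaging over cyclic shifts with coefficients $\min(m,d-m)$, summing to $\lfloor d^2/4\rfloor$, is precisely the paper's Lemma~\ref{lem_disttoun}, down to the two-valued extremal example. So the structure is sound and identical to the paper's argument.

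The gap is in your final paragraph, i.e., exactly at the point you flagged. You are right that the outcome $x=d$ is the one delicate issue; in fact the paper sidesteps it rather than solving it, since Lemma~\ref{lem_disttoun} is stated only for distributions with range $\{0,\ldots,d-1\}$, and in the application inside Theorem~\ref{thm_main} this is legitimate because the Born rule gives outcome $d$ probability zero under $\psi$, hence $P_{X|A\Lambda\Psi}(d|a,\lambda,\psi)=0$ for $P_{\Lambda|\psi}$-almost every $\lambda$. But the mechanism you propose for the general case --- absorbing the mass $q\equiv Q^0_\lambda(d)$ using the slack from the ``both outcomes equal $d$'' events --- is not sufficient: if the two parties never simultaneously output $d$, that slack is identically zero in every pair of settings while $q$ can still be positive, and then your derived bounds (chain defect $\delta\le I^\lambda$ plus correction $q/2$) only give $(d/4)I^\lambda+q/2$ in total-variation normalisation, which overshoots the target $(d/4)I^\lambda$. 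What closes the argument is the sharper per-pair estimate $\Pr[X=Y\le d-1]\le\sum_{x<d}\min\bigl(P_X(x),P_Y(x)\bigr)$, which charges the events $\{X=d\}$ and $\{Y=d\}$ in full and yields, for each term of $I_{n,d}$,
\begin{align*}
1-\sum_{x=0}^{d-1}P_{XY|ab\lambda}(x,x)
\;\ge\;&\frac12\sum_{x=0}^{d-1}\bigl|P_{X|a\lambda}(x)-P_{Y|b\lambda}(x)\bigr|\\
&+\frac12\bigl(P_{X|a\lambda}(d)+P_{Y|b\lambda}(d)\bigr)
\end{align*}
(similarly for the wrap-around term). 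Summing over the $2n$ pairs, in which every setting occurs exactly twice, gives $I_{n,d}(P_{XY|AB\lambda})\ge\delta+q$ with $\delta$ the shift defect of the restriction of $Q^0_\lambda$ to $\{0,\ldots,d-1\}$; the averaging argument applied to that sub-probability vector bounds its distance from the constant $(1-q)/d$ by $(d/4)\delta$, and the remaining correction $q/2$ is then covered because $(d/4)\delta+q/2\le(d/4)(\delta+q)$ for $d\ge2$. With this accounting (or, alternatively, by adding the hypothesis --- harmless for the application --- that $X$ and $Y$ take values in $\{0,\ldots,d-1\}$) your proof closes; as written, its last step does not.
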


(Although our proof deals with the general case, the main ideas can be
seen by working through the analogous argument in the slightly simpler
(but less general) case in which $\Lambda$ is discrete, so that
``$\int \mathrm{d}P_{\Lambda}(\lambda)$'' is replaced by
``$\sum_\lambda P_{\Lambda}(\lambda)$''.)

The proof of Lemma~\ref{lem_extended} is given in
Appendix~\ref{app_In}. It generalises an argument described
in~\cite{ColRen2011}, which is in turn based on work related to
chained Bell inequalities~\cite{Pearle1970,BraunsteinCaves1990} (see
also~\cite{BHK,BKP}).

We have now everything ready to prove the uniqueness theorem.

\begin{proof}[Proof of Theorem~\ref{thm_main}] Let
  $\alpha, \gamma\in\mathbb{R}$ such that
  $e^{i\gamma}\alpha=\braket{\psi}{\psi'}$. Furthermore, let
  $k, d, \xi$ be as defined by Lemma~\ref{lem_overlap}, so that
  $\braket{\psiu}{\psiu'}=\alpha$. Then there exists an isometry $U$
  such that $U\psi=\psiu$ and $U\psi'=e^{i\gamma}\psiu'$ (see
  Lemma~\ref{lem_isometry} of
  Appendix~\ref{app_additional}).\footnote{If $\cH$ has a larger
    dimension than $\cH_A\ot\cH_B$ (e.g., because $\cH$ is
    infinite dimensional) then we can consider an (infinite
    dimensional) extension of $\cH_B$, keeping the same notation for
    convenience.}  Now let $n \in \mathbb{N}$ and let $A$, $B$, $X$
  and $Y$ be random variables that satisfy the three conditions of the
  theorem for the isometry $U$ and for the projective measurements
  that satisfy the bound of Equation~\eqref{eq_Inbound}. According to
  the Born rule (Condition~\ref{it_quantum}), the distribution
  $P_{X Y|AB\psi} \equiv P_{XY|AB\Psi}(\cdot, \cdot|\cdot, \cdot,
  \psi)$
  conditioned on the choice of initial state $\Psi = \psi$ corresponds
  to the one considered in~\eqref{eq_Inbound}, i.e.,
\begin{align}
  I_{n,d}(P_{XY|AB\psi})\leq\frac{\pi^2}{6n}\ . \label{eq_Inzero} 
\end{align}

Note that
$P_{A|B\Psi}P_{Y\Lambda|AB\Psi}=P_{AY\Lambda|B\Psi}=P_{A|BY\Lambda
  \Psi} P_{Y\Lambda|B\Psi}$.  Freedom of choice
(Condition~\ref{it_free}) implies that
$P_{A|B\Psi}=P_{A|BY\Lambda\Psi}$.  It follows that
$P_{Y\Lambda|AB\Psi}=P_{Y\Lambda|B\Psi}$.  By a similar reasoning, we
also have $P_{X\Lambda|AB\Psi}=P_{X\Lambda|A\Psi}$. The freedom of
choice condition also ensures that
$P_{AB\Lambda|\Psi}=P_AP_BP_{\Lambda|\Psi}$ with
$\mathrm{supp}(P_A)\supseteq\cA_n$ and
$\mathrm{supp}(P_B)\supseteq\cB_n$. We can thus apply
Lemma~\ref{lem_extended} to give, with~\eqref{eq_Inzero},
\begin{align*}
\int\mathrm{d}P_{\Lambda|\psi}(\lambda)\sum_{x=0}^{d-1} \bigl| P_{X |A \Lambda \Psi}(x| 0,\lambda, \psi) -
  \frac{1}{d} \bigr|\leq\frac{d \pi^2}{12 n } \ .
\end{align*}
Considering only the term $x=k$ (recall that $k< d$) and noting that
the left hand side does not depend on $n$, we have
\begin{align*}
  \int\mathrm{d}P_{\Lambda|\psi}(\lambda)\bigl| P_{X |A \Lambda
    \Psi}(k| 0,\lambda, \psi) - \frac{1}{d} \bigr|=0
\end{align*}
(otherwise, by taking $n$ sufficiently large, we will get a
contradiction with the above).  Let $\cL$ be the set of all elements
$\lambda$ from the range of $\Lambda$ for which $P_{X |A \Lambda
  \Psi}(k| 0,\lambda, \psi)$ is defined and equal
to~$\frac{1}{d}$. The above implies that $P_{\Lambda|\Psi}(\cL|\psi) =
1$. Furthermore, completeness of $\Lambda$
(Condition~\ref{it_completeness}) implies that for any $\lambda \in
\cL$ for which $P_{X |A \Lambda \Psi}(k| 0,\lambda, \psi')$ is defined
\begin{align*}
  P_{X|A\Lambda\Psi}(k|0,\lambda,\psi')=P_{X|A\Lambda\Psi}(k|0,\lambda,\psi)=\frac{1}{d}\,
  .
\end{align*}
Thus, using $P_{\Lambda|A \Psi}=P_{\Lambda|\Psi}$ (which is implied
by the freedom of choice assumption, Condition~\ref{it_free}) and
writing $\delta_{\cL}$ for the indicator function, we have
\begin{align} \label{eq_PX}
  P_{X|A\Psi}(k|0,\psi')
  &=\int
  \mathrm{d}P_{\Lambda|\Psi}(\lambda|\psi')P_{X|A\Lambda\Psi}(k|0,\lambda,\psi')
 \\
  &\geq\int \delta_{\cL}(\lambda)
  \mathrm{d}P_{\Lambda|\Psi}(\lambda|\psi')P_{X|A\Lambda\Psi}(k|0,\lambda,\psi')
  \nonumber \\
  &=\frac{1}{d}\int \delta_{\cL}(\lambda) \mathrm{d}P_{\Lambda|\Psi}(\lambda|\psi')
  = \frac{1}{d} P_{\Lambda|\Psi}(\cL|\psi')  \nonumber
\, .
\end{align}

However, because the vector $e^{i \gamma}\psiu'=U\psi'$ has no
overlap with $\ket{k}$ (because $k<d$) and because the measurement
$\{\Pi^a_x\}_{x\in\cX_d}$ for $a=0$ corresponds to projectors along
the $\{\ket{x}\}_{x=0}^{d}$ basis, we have $P_{X|A\Psi}(k|0,\psi') =
0$ by the Born rule (Condition~\ref{it_quantum}). Inserting this
in~\eqref{eq_PX} we conclude that $P_{\Lambda|\Psi}(\cL|\psi') = 0$.
\end{proof}

\section{Discussion}

It is interesting to compare Theorem~\ref{thm_main} to the result
of~\cite{Pusey2012}, which we briefly described in the
introduction. The latter is based on a different experimental setup,
where $n$ particles with wave functions $\Psi_1,\ldots,\Psi_n$, each
chosen from a set $\{\psi,\psi'\}$, are prepared independently at $n$
remote locations. The $n$ particles are then directed to a device
where they undergo a joint measurement with outcome $Z$.

The main result of~\cite{Pusey2012} is that, for any variable
$\Lambda$ that satisfies certain assumptions, the wave functions
$\Psi_1, \ldots, \Psi_n$ are determined by $\Lambda$.  One of these
assumptions is that $\Lambda$ consists of $n$ parts, $\Lambda_1,
\ldots, \Lambda_n$, one for each particle. To state the other
assumptions and compare them to ours, it is useful to consider the
causal order defined by the transitive completion of the
relations\footnote{Note that this causal order captures the
  aforementioned experimental setup. In particular, we have $\Psi_i
  \nbef \Lambda_j$ for $i \neq j$, reflecting the idea that the $n$
  particles are prepared in separate isolated devices.}
\begin{align} \label{eq_causal2} \Psi_i \bef \Lambda_i \, \text{
    ($\forall \, i$)}, \quad (\Lambda_1, \ldots, \Lambda_n) \bef \Lambda,
  \quad \Lambda \bef Z \ .
\end{align}
It is then easily verified that the assumptions of~\cite{Pusey2012} imply
the following:
\begin{enumerate}
 \item $P_{Z|\Psi_1 \cdots \Psi_n}$ satisfies the Born rule;
 \item $\Psi_1, \ldots, \Psi_n$ are free choices from $\{\psi,
   \psi'\}$ w.r.t.~\eqref{eq_causal2};
 \item $\Lambda$ is complete w.r.t.~\eqref{eq_causal2}.
\end{enumerate}
These conditions are essentially in one-to-one correspondence with the
assumptions of Theorem~\ref{thm_main}.\footnote{The choice of a
  measurement setting may be encoded into the state of an extra system
  that is fed into a fixed measurement device. We hence argue that
  there is no conceptual difference between the free choice of a
  state, as implied by the assumptions of~\cite{Pusey2012} (in
  particular, preparation independence), and the free choice of a
  measurement setting, as assumed in Theorem~\ref{thm_main}.}  The
main difference thus concerns the modelling of the physical state
$\Lambda$, which in the approach of~\cite{Pusey2012} is assumed to
have an internal structure.  A main goal of the present work was to
avoid using this assumption (see also~\cite{Hardy_PBR,Aaronson_PBR}
for alternative arguments).

We conclude by noting that the assumptions of Theorem~\ref{thm_main}
and Corollary~\ref{cor_main} may be weakened. For example, the
independence condition that is implied by free choice may be replaced
by a partial independence condition along the lines considered
in~\cite{CR_free}.  An analogous weakening was given
in~\cite{Hall_PBR,SF} regarding the argument of~\cite{Pusey2012}. More
generally, recall that all our assumptions are properties of the
probability distribution $P_{XYAB\Psi\Lambda}$. One may therefore
replace them by relaxed properties that need only be satisfied for
distributions that are $\eps$-close (in total variation distance) to
$P_{XYAB\Psi\Lambda}$. (For example, the Born rule may only hold
approximately.) It is relatively straightforward to verify that the
proof still goes through, leading to the claim that $\Psi =
f(\Lambda)$ holds with probability at least $1-\delta$, with $\delta
\to 0$ in the limit where $\eps \to 0$.

Nevertheless, none of the three assumptions of Theorem~\ref{thm_main}
can be dropped without replacement.  Indeed, without the Born rule,
the wave function $\Psi$ has no meaning and could be taken to be
independent of the measurement outcomes $X$. Furthermore, a recent
impossibility result~\cite{Lewis2012} implies that the analogous
theorem with the second assumption omitted does not hold. It also
implies that the statement of Theorem~\ref{thm_main} cannot hold for a
setting with only one single measurement. This means that there exist
$\Psi$-epistemic theories compatible with the remaining
assumptions. However, in this case, it is still possible to exclude a
certain subclass of such theories, called \emph{maximally
  $\Psi$-epistemic} theories~\cite{Maroney} (see
also~\cite{LeiferMaroney}). Finally, completeness of $\Lambda$ is
necessary because, without it, $\Lambda$ could be set to a constant,
in which case it clearly cannot determine $\Psi$.

\acknowledgments

We thank Omar Fawzi, Michael Hush, Matt Leifer, Matthew Pusey and Rob
Spekkens for useful discussions. We are also grateful to Giorgos
Eftaxias for discussions that led to the discovery of an important
omission in an earlier version. Research leading to these results was
supported by the Swiss National Science Foundation (through the
National Centre of Competence in Research \emph{Quantum Science and
  Technology} and grant No.~200020-135048), the CHIST-ERA project
DIQIP, and the European Research Council (grant No.~258932).

\bibliographystyle{naturemag}

\appendix

\section{Proof of Corollary~\ref{cor_main}} \label{app_corollaryproof}

For any distinct $\psi, \psi' \in \cS$, let $\cL_{\psi, \psi'}$ be the
set defined by Theorem~\ref{thm_main}, i.e.,
\begin{align*}
  P_{\Lambda|\Psi}(\cL_{\psi, \psi'}|\psi) & = 1 \\
  P_{\Lambda|\Psi}(\cL_{\psi, \psi'}|\psi') & = 0 \ ,
\end{align*}
and for any $\psi \in \cS$ define the (countable) intersection
$\cL_\psi \equiv \bigcap_{\psi' \in \cS \setminus \{\psi\}} \cL_{\psi,
  \psi'}$.  This satisfies
\begin{align*}
  P_{\Lambda | \Psi}(\cL_{\psi}|\psi') = \begin{cases} 1 & \text{if
      $\psi = \psi'$} \\ 0 & \text{otherwise.} \end{cases}
\end{align*}
(Here we have used that for any probability distribution $P$ and for
any events $L, L'$, $P(L) = P(L') = 1$ implies that $P(L \cap L') =
1$.)  

  To define the function $f$, we specify the inverse sets
  \begin{align*}
    f^{-1}(\psi) = \cL_\psi \setminus \bigl(\bigcup_{\psi' \in
      \cS \setminus \{\psi\}} \cL_{\psi'} \bigr) \ .
  \end{align*}
  The function $f$ is well defined on
  $\bigcup_{\psi\in\cS}f^{-1}(\psi)$ because, by construction, the
  sets $f^{-1}(\psi)$ are disjoint for different $\psi \in
  \cS$. Furthermore, it follows from the above that for any $\psi \in
  \cS$
  \begin{align*}
    P_{\Lambda|\Psi}(f^{-1}(\psi)| \psi) = 1 \ .
  \end{align*}
  This implies that $f(\Lambda) = \Psi$ holds with probability $1$
  conditioned on $\Psi = \psi$. The assertion of the corollary then
  follows because this is true for any $\psi\in\cS$.  \qed

\section{Quantum correlations} \label{app_QM}
The aim of this appendix is to derive the bound~\eqref{eq_Inbound}
used in the proof of the uniqueness theorem.

Note that the state $\psiu$, defined by~\eqref{eq_psi}, has support on
$\bar{\cH}\ot\bar{\cH}$, where
$\bar{\cH}=\mathrm{span}\{\ket{0},\ket{1},\ldots,\ket{d-1}\}$. We will
choose the projectors $\Pi^a_x$ and $\Pi^b_y$ for $a\in\cA_n$ and
$b\in\cB_n$ and for $x,y\in\{0, \ldots, d-1\}$ to act on $\bar{\cH}$,
so can restrict to this subspace.

To define the projectors, we introduce the generalised Pauli operators
$\hat{X}_d=\sum_{l=0}^{d-1}\ketbra{l}{l\oplus 1}$, where $\oplus$
denotes addition modulo $d$, and
$\hat{Z}_d:=\sum_{j=0}^{d-1}e^{2\pi ij/d}\proj{j}$ as well as the
unitary $U_d:=\frac{1}{\sqrt{d}}\sum_{jk}e^{2\pi ijk/d}\ketbra{j}{k}$.
These operators satisfy $\hat{X}_d=U_d\hat{Z}_dU_d^{\dagger}$.  Our
construction is based on taking the $2n\Th$ root of these operators.
However, because there are many choices of $2n\Th$ root of a complex
number, we need to specify which $2n\Th$ root we mean.  We use a
slightly different choice for the measurements in $\cA_n$ and $\cB_n$.
These choices can be conveniently written by defining $\md_A[v]$ to
mean the number in $(-1/2,1/2]$ that is equal to $v+m$ for some
integer $m$ ($\md$ stands for ``shift'') and $\md_B[v]$ to mean the
number in $[-1/2,1/2)$ that is equal to $v+m$ for some integer
$m$. For $x\in\{0,\ldots,d-1\}$ and $a\in\{0,2,\ldots,{2n-2}\}$, the
projectors $\Pi^a_x$ are along the vectors
$\ket{\zeta_x^a}=U_dZ_{n,d}[a]U_d^\dagger\ket{x}$, where
\begin{align*}
 Z_{n,d}[a]:=\sum_{j=0}^{d-1}\exp\left[\pi i\md_A[j/d]\frac{a}{n}\right]\proj{j},
\end{align*}
while for $y\in\{0,\ldots,d-1\}$ and $b\in\{1,3,\ldots,{2n-1}\}$, the projectors
$\Pi^b_y$ are along the vectors $\ket{\zeta_y^b}=U_dZ'_{n,d}[b]U_d^\dagger\ket{y}$, where
\begin{align*}
  Z'_{n,d}[b]:=\sum_{j=0}^{d-1}\exp\left[\pi i\md_B[j/d]\frac{b}{n}\right]\proj{j}.
\end{align*}
Note that for each $k\in\{0,1,\ldots,2n-1\}$ and
$j,j'\in\{0,1,\ldots,d-1\}$ we have
$\braket{\zeta_j^k}{\zeta_{j'}^k}=\delta_{j,j'}$, so $\{\Pi_j^k\}_j$
is a projective measurement on $\bar{\cH}$.

The probability distribution that gives rise to the bound
in~\eqref{eq_Inbound} is obtained from a measurement of $\psiu$ with
respect to these projectors, i.e.,
$P_{XY|AB}(x,y|a,b)=|(\bra{\zeta_x^a}\bra{\zeta_y^b})\ket{\psiu}|^2$. We
are now going to show that
\begin{align} \label{eq_probneighbour}
\sum_x P_{XY|AB}(x,x|a,b) =\frac{\sin^2\frac{\pi}{2n}}{d^2\sin^2\frac{\pi}{2dn}}\, ,
\end{align}
for $|a-b|=1$, and
\begin{align} \label{eq_probborder}
\sum_x P_{XY|AB}(x,x\oplus
1|0,2n-1)=\frac{\sin^2\frac{\pi}{2n}}{d^2\sin^2\frac{\pi}{2dn}}\, .
\end{align}

For this it is useful to use the relation that for any operator $C$,
$(\id\ot C)\ket{\psiu}=(C^{\text{T}}\ot\id)\ket{\psiu}$, where
$C^{\text{T}}$ denotes the transpose of $C$ in the $\ket{i}$
basis. Thus, noting that $U_d^{\text{T}}=U_d$, we have
\begin{align*}
  (\bra{\zeta_x^a}\bra{\zeta_x^b})\ket{\phi}&=(\bra{\zeta_x^a}\bra{x})(\id\ot
                                              U_d(Z'_{n,d}[b])^\dagger
                                              U_d^\dagger)\ket{\phi}\\
&=\bra{\zeta_x^a}\bra{x}(U_d^\dagger(Z'_{n,d}[b])^\dagger
    U_d\ot\id)\ket{\phi}\\
&=\frac{1}{\sqrt{d}}\bra{\zeta_x^a}U_d^\dagger(Z'_{n,d}[b])^\dagger
    U_d\ket{x}\\
&=\frac{1}{\sqrt{d}}\bra{x}U_d(Z_{n,d}[a])^\dagger(U_d^{\dagger})^2(Z'_{n,d}[b])^\dagger
  U_d\ket{x}.
\end{align*}
Then, using
\begin{align*}
(U_d^{\dagger})^2&=\frac{1}{d}\sum_{jkm}e^{-2\pi
  ij(k+m)/d}\ketbra{k}{m}
=\sum_{k=0}^{d-1}\ketbra{k}{-k \oplus d}\, ,
\end{align*}
so that
\begin{align*}
  \bra{k}(U_d^\dagger)^2\ket{m}&=\delta_{k,0}\delta_{m,0}+\sum_{j=1}^{d-1}\delta_{k,j}\delta_{d-j,m}
\end{align*}
we obtain
\begin{align*}
(&\bra{\zeta_x^a}\bra{\zeta_x^b})\ket{\phi}=\frac{1}{d^{3/2}}\left(1+\phantom{\sum_{j=1}^{d-1}}\right.\\
&\left.\sum_{j=1}^{d-1}\exp\!\left[\pi
  i\left(2x\!-\!\md_A\!\left[\frac{j}{d}\right]\!\frac{a}{n}\!-\!\md_B\!\left[\frac{d-j}{d}\right]\!\frac{b}{n}\right)\right]\right)\!.
\end{align*}

We then note that 
$$\md_A[j/d]=\left\{\begin{array}{rcc}j/d&\ &0\leq j\leq d/2\\j/d-1&\ &d/2<j\leq d-1\end{array}\right.$$
and
$$\md_B[1-j/d]=\left\{\begin{array}{rcc}-j/d&\ &0\leq j\leq d/2\\1-j/d&\ &d/2<j\leq d-1\end{array}\right.$$
so we can divide the sum into two parts (absorbing the ``$1+$'' into the
first sum by allowing $j=0$):
\begin{align*}
(&\bra{\zeta_x^a}\bra{\zeta_x^b})\ket{\phi}=\frac{1}{d^{3/2}}\left(\sum_{j=0}^{\lfloor
    d/2\rfloor}\exp\left[-\pi i\frac{(a-b)j}{nd}\right]+\right.\\
&\left.\sum_{j=\lfloor d/2\rfloor+1}^{d-1}\exp\left[-\pi i\frac{(a-b)}{n}\left(\frac{j}{d}-1\right)\right]\right)\\
&=\frac{-1}{d^{3/2}}\exp\!\left[-\pi
  i\frac{(a\!-\!b)}{nd}(\lfloor\frac{d}{2}\rfloor+1)\right]\!\frac{1-\exp\!\left[\pi
  i\frac{(a-b)}{n}\right]}{1-\exp\!\left[-\pi
    i\frac{(a-b)}{nd}\right]} .
\end{align*}


We can then use $|1-e^{iy}|^2=4\sin^2\frac{y}{2}$ to obtain
\begin{align*}
|(\bra{\zeta_x^a}\bra{\zeta_x^b})\ket{\phi}|^2&=\frac{1}{d^3}\frac{\sin^2\frac{\pi(a-b)}{2n}}{\sin^2\frac{\pi(a-b)}{2dn}}\, .
\end{align*}
This is independent of $x$, so if we sum over all $x$ we obtain
\begin{align*}
\sum_xP_{XY|AB}(x,x|a,b)=\frac{\sin^2\frac{\pi(a-b)}{2n}}{d^2\sin^2\frac{\pi(a-b)}{2dn}}\, ,
\end{align*}
from which~\eqref{eq_probneighbour} follows. 

Eq.~\eqref{eq_probborder} can be obtained by a similar argument. In
this case we want
\begin{align*}
  &(\bra{\zeta_x^0}\bra{\zeta_{x\oplus
  1}^{2n-1}})\ket{\phi}=\frac{1}{d^{3/2}}\left(1+\phantom{\sum_{j=1}^{d-1}}\right.\\
&\left.\sum_{j=1}^{d-1}\exp\!\left[\pi i\!
  \left(\frac{2xj}{d}\!-\!\md_B\!\left[\frac{d\!-\!j}{d}\!\right]\!\frac{2n\!-\!1}{n}+\frac{2(d\!-\!j)(x\!\oplus\!1)}{d}\right)\!\right]\!\right)\\
&=\frac{1}{d^{3/2}}\left(\sum_{j=0}^{\lfloor d/2\rfloor}\exp\left[-\frac{\pi ij}{nd}\right]+\sum_{j=\lfloor d/2\rfloor+1}^{d-1}\!\!\exp\left[-\pi i
  \frac{j/d-1}{n}\right]\right)\\
&=\frac{-1}{d^{3/2}}\exp\left[-\frac{\pi
  i}{nd}(1+\lfloor\frac{d}{2}\rfloor)\right]\frac{1-\exp\left[-\frac{\pi
  i}{n}\right]}{1-\exp\left[-\frac{\pi
  i}{nd}\right]}\,
.
\end{align*}
Hence,
\begin{align*}
\sum_x|(\bra{\zeta_x^0}\bra{\zeta_{x\oplus 1}^{2n-1}})\ket{\phi}|^2&=\frac{1}{d^2}\frac{\sin^2\frac{\pi}{2n}}{\sin^2\frac{\pi}{2dn}}\, .
\end{align*}

Combining~\eqref{eq_probneighbour} and~\eqref{eq_probborder} we find
\begin{align*}
I_{n,d}(P_{XY|AB})=2n\left(1-\frac{\sin^2\frac{\pi}{2n}}{d^2\sin^2\frac{\pi}{2dn}}\right).
\end{align*} 
Using $x^2-x^4/3\leq\sin^2x\leq x^2$ for $0\leq x\leq 1$ leads to the
bound~\eqref{eq_Inbound}.

\section{Proof of Lemma~\ref{lem_extended}}\label{app_In} 

In the following we use the abbreviations $P_{XY|AB\lambda} \equiv
P_{XY|AB\Lambda}(\cdot,\cdot|\cdot,\cdot,\lambda)$ and
$P_{XY|ab\lambda} = P_{XY|AB\lambda}(\cdot,\cdot|a,b)$ for the
distributions conditioned on $\Lambda = \lambda$ and $(A,B) = (a,b)$.

The inequality in Lemma~\ref{lem_extended} can be expressed
in terms of the total variation distance, defined by $D(P_X,Q_X)
\equiv \frac{1}{2}\sum_x|P_X(x)-Q_X(x)|$, as
  \begin{align*}
   \int \mathrm{d}P_{\Lambda}(\lambda) D(P_{X |a_0 \lambda},  1/d)
   \leq\frac{d}{4} I_{n,d}(P_{XY|AB}) \, .
  \end{align*}
  where $1/d$ denotes the uniform distribution over $\{0, \ldots,
  d-1\}$, and where $a_0=0$.  Furthermore, using $P_{XY|AB} = \int
  \mathrm{d} P_{\Lambda}(\lambda) P_{XY|AB \lambda}$ (which holds
  because $P_{\Lambda|AB} = P_{\Lambda}$) and that $I_{n,d}$ is a
  linear function, we have
\begin{align*}
     I_{n,d}(P_{XY|AB}) = \int \mathrm{d}P_{\Lambda}(\lambda)
     I_{n,d}(P_{XY|AB\lambda}) \ .
\end{align*}
It therefore suffices to show that, for any $\lambda$, 
\begin{align*}
   D(P_{X |a_0 \lambda},  1/d) \leq\frac{d}{4} I_{n,d}(P_{XY|AB
     \lambda})  \ .
\end{align*}

For this, we consider the distribution $P_{X\oplus 1|a\lambda}$, which
corresponds to the distribution of $X$ if its values are shifted by
one (modulo $d$). According to Lemma~\ref{lem_disttoun} and using
$\frac{1}{d}\lfloor\frac{d^2}{4}\rfloor\leq\frac{d}{4}$ we have
\begin{align*}
  D(P_{X |a_0 \lambda},  1/d) \leq \frac{d}{4} D(P_{X\oplus 1|a_0
    \lambda},P_{X|a_0 \lambda}) \ .
\end{align*}
The assertion then follows with
    \begin{align*}
      &I_{n,d}(P_{XY|AB\lambda}) \\
      &=2n-\sum_x
      P_{XY|a_0b_0\lambda}(x,x\oplus 1)-\!\!\!\!\sum_{\genfrac{}{}{0pt}{}{x,a,b}{|a-b|=1}}P_{XY|ab\lambda}(x,x)    \\
      &\geq D(P_{X\oplus 1|a_0 b_0 \lambda},P_{Y|a_0 b_0 \lambda}) +\!\!\!\!
      \sum_{\genfrac{}{}{0pt}{}{a,b}{|a-b|=1}}D(P_{X|ab\lambda},P_{Y|ab\lambda})
      \\
      &\geq D(P_{X\oplus 1|a_0\lambda},P_{X|a_0\lambda}) \, ,
    \end{align*}
    where we have set $b_0 \equiv 2n-1$; the first inequality follows
    from Lemma~\ref{lem_Dbound}; the second is obtained with
    $P_{X|ab\lambda}=P_{X|a\lambda}$ and
    $P_{Y|ab\lambda}=P_{Y|b\lambda}$ (which are implied by the
    conditions stated in the lemma) as well as the triangle inequality
    for $D(\cdot, \cdot)$. \qed

\section{Additional Lemmas} \label{app_additional}

\begin{lemma} \label{lem_isometry} For any unit vectors $\wf{\psi},
  \wf{\psi'} \in \cH_1$ and $\wf{\psiu}, \wf{\psiu'} \in \cH_2$, where
  $\mathrm{dim}(\cH_1)\leq \mathrm{dim}(\cH_2)$ and
  $\braket{\psi}{\psi'} = \braket{\psiu}{\psiu'}$, there exists an
  isometry $U: \cH_1 \to \cH_2$ such that $U\wf{\psi} = \wf{\psiu}$
  and $U \wf{\psi'} = \wf{\psiu'}$.
\end{lemma}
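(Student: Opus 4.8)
The plan is to first build the isometry on the (at most two-dimensional) subspace spanned by $\wf{\psi}$ and $\wf{\psi'}$, exploiting the fact that the equality of overlaps forces the two pairs of vectors to have identical Gram matrices, and then to extend this partial isometry to all of $\cH_1$ using the dimension hypothesis. Concretely, since $\wf{\psi}, \wf{\psi'}, \wf{\psiu}, \wf{\psiu'}$ are all unit vectors and $\braket{\psi}{\psi'} = \braket{\psiu}{\psiu'}$ (so also $\braket{\psi'}{\psi} = \overline{\braket{\psi}{\psi'}} = \overline{\braket{\psiu}{\psiu'}} = \braket{\psiu'}{\psiu}$), the $2\times 2$ Gram matrices of $(\wf{\psi},\wf{\psi'})$ and of $(\wf{\psiu},\wf{\psiu'})$ coincide.

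Next I would define a linear map $U_0$ on $V_1 \equiv \mathrm{span}\{\wf{\psi},\wf{\psi'}\}$ by setting $U_0 \wf{\psi} = \wf{\psiu}$ and $U_0 \wf{\psi'} = \wf{\psiu'}$ and extending linearly. The key point, which also covers the degenerate case where $\wf{\psi}$ and $\wf{\psi'}$ are parallel, is that for arbitrary scalars $\alpha, \beta$ the quantity $\|\alpha \wf{\psi} + \beta \wf{\psi'}\|^2$ depends only on the Gram matrix of $(\wf{\psi},\wf{\psi'})$; since this equals the Gram matrix of $(\wf{\psiu},\wf{\psiu'})$, we obtain $\|\alpha \wf{\psi} + \beta \wf{\psi'}\| = \|\alpha \wf{\psiu} + \beta \wf{\psiu'}\|$ for all $\alpha, \beta$. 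This single identity simultaneously shows that $U_0$ is well defined (any linear relation among $\wf{\psi},\wf{\psi'}$ is inherited by $\wf{\psiu},\wf{\psiu'}$) and that it preserves norms, hence, by polarization and complex linearity, inner products. Thus $U_0 : V_1 \to V_2 \equiv \mathrm{span}\{\wf{\psiu},\wf{\psiu'}\}$ is an isometry, and in particular $\dim V_1 = \dim V_2$.

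Finally I would extend $U_0$ to an isometry $U$ on all of $\cH_1$. Writing $\cH_1 = V_1 \oplus V_1^\perp$ and $\cH_2 = V_2 \oplus V_2^\perp$, the hypothesis $\mathrm{dim}\,\cH_1 \leq \mathrm{dim}\,\cH_2$ together with $\dim V_1 = \dim V_2$ gives $\dim V_1^\perp \leq \dim V_2^\perp$, so an orthonormal basis of $V_1^\perp$ can be mapped injectively to an orthonormal family in $V_2^\perp$; declaring $U$ to act as $U_0$ on $V_1$ and as this embedding on $V_1^\perp$ yields a global isometry with $U\wf{\psi} = \wf{\psiu}$ and $U\wf{\psi'} = \wf{\psiu'}$, as required. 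The argument is essentially routine linear algebra; the only points requiring a little care are the well-definedness of $U_0$ in the degenerate case, which the norm identity above handles uniformly, and the bookkeeping of dimensions in the extension step, where the assumption $\mathrm{dim}\,\cH_1 \le \mathrm{dim}\,\cH_2$ (interpreted via cardinalities of orthonormal bases when the spaces are infinite dimensional) is exactly what is needed.
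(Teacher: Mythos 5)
Your proof is correct and follows essentially the same route as the paper's: both construct the isometry on the (at most two-dimensional) span of $\psi$ and $\psi'$ and then extend it to all of $\cH_1$ using the dimension hypothesis. The only difference is presentational: the paper exhibits the partial isometry explicitly via the decomposition $\psi' = \alpha \psi + \beta \psi^\perp$ and the map $\ketbra{\psiu}{\psi} + \ketbra{\psiu^\perp}{\psi^\perp}$, whereas you justify well-definedness and norm preservation abstractly through equality of the Gram matrices, which handles the degenerate (parallel) case and the extension step somewhat more carefully.
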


\begin{proof}
  With $\alpha = \braket{\psi}{\psi'} = \braket{\psiu}{\psiu'}$ and
  $\beta = \sqrt{1-|\alpha|^2}$ we can write $\wf{\psi'} = \alpha
  \wf{\psi} + \beta \wf{\psi^\perp}$ and $\wf{\psiu'} = \alpha
  \wf{\psiu} + \beta \wf{\psiu^\perp}$ with unit vectors
  $\wf{\psi^\perp}$ and $\wf{\psiu^\perp}$ orthogonal to $\wf{\psi}$
  and $\wf{\psiu}$, respectively.  The isometry $U$ can be taken as
  any that acts as $\ketbra{\psiu}{\psi} +
  \ketbra{\psiu^\perp}{\psi^\perp}$ on the subspace spanned by
  $\wf{\psi}$ and $\wf{\psi'}$.
\end{proof}

\begin{lemma} \label{lem_Dbound} For two random variables $X$ and $Y$
  with joint distribution $P_{XY}$, the total variation distance
  between the marginal distributions $P_X$ and $P_Y$ satisfies
  \begin{equation*}
    D(P_X,P_Y)\leq 1-\sum_xP_{XY}(x,x)\, .
  \end{equation*}
\end{lemma}

\begin{proof}
  Consider $P_{XY}^{\neq} \equiv P_{XY|X\neq Y}$, the distribution of
  $X$ and $Y$ conditioned on the event that $X\neq Y$, as well as
  $P_{XY}^{=} \equiv P_{XY|X=Y}$ so that
  \begin{equation*}
    P_{XY}=p_{\neq}P_{XY}^{\neq}+(1-p_{\neq})P_{XY}^=
  \end{equation*}
  where $p_{\neq}\equiv 1-\sum_xP_{XY}(x,x)$. The marginals also obey this
  relation, i.e.,
\begin{eqnarray*}
  P_{X}&=&p_{\neq}P_{X}^{\neq}+(1-p_{\neq})P_{X}^=\\
  P_{Y}&=&p_{\neq}P_{Y}^{\neq}+(1-p_{\neq})P_{Y}^=\, .
\end{eqnarray*}
  Hence, since the total variation distance is convex,
  \begin{eqnarray*}
  D(P_X, P_Y)&\leq&p_{\neq} D(P_X^{\neq}, P_Y^{\neq}) + (1-p_{\neq})  D(P_X^=,
  P_Y^=) \\
  &\leq&p_{\neq} \, ,
  \end{eqnarray*}  
  where we have used the fact that the total variation distance is at
  most $1$, as well as $D(P_X^=,P_Y^=)=0$ in the last line.
\end{proof}

\begin{lemma}\label{lem_disttoun}
  The total variation distance between any probability distribution
  with range $\{0,1,\ldots,{d-1}\}$ and the uniform distribution over
  this set, $1/d$, is bounded by
  \begin{align*} 
  D(P_X,1/d)\leq\frac{1}{d}\lfloor\frac{d^2}{4}\rfloor  D(P_{X\oplus
    1},P_X) \ .
  \end{align*}
\end{lemma}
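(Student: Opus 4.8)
The plan is to reduce the statement to a discrete ``anti-Poincar\'e'' inequality on the cycle $\mathbb{Z}/d\mathbb{Z}$ and prove it by $\ell^1$--$\ell^\infty$ duality combined with a cyclic summation by parts. First I would pass to the centred variables $q_x \equiv P_X(x) - 1/d$ and their cyclic differences $\delta_x \equiv q_x - q_{x-1}$ (all indices mod $d$). By definition $2 D(P_X,1/d) = \sum_x |q_x|$, while $2 D(P_{X\oplus 1},P_X) = \sum_x |P_X(x-1) - P_X(x)| = \sum_x |\delta_x|$, so the claim is equivalent to $\sum_x |q_x| \le \frac{1}{d}\lfloor d^2/4\rfloor \sum_x |\delta_x|$, where the side constraints $\sum_x q_x = 0$ and $\sum_x \delta_x = 0$ hold automatically.

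Next I would invoke duality, $\sum_x |q_x| = \max_{|s_x|\le 1}\sum_x s_x q_x$, and bound the sum for each fixed $s$. The key step is a summation by parts adapted to the cyclic structure: writing $\bar s \equiv \frac{1}{d}\sum_x s_x$, I define a potential $w$ on the cycle through $w_x - w_{x+1} = s_x - \bar s$. This is consistent precisely because $\sum_x (s_x - \bar s) = 0$, and it fixes $w$ up to an irrelevant additive constant. Reindexing $\sum_x w_x q_{x-1} = \sum_x w_{x+1} q_x$ gives $\sum_x w_x \delta_x = \sum_x (w_x - w_{x+1}) q_x = \sum_x s_x q_x$, where the constant $\bar s$ drops out because $\sum_x q_x = 0$. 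Then, using $\sum_x \delta_x = 0$ to re-centre $w$, one obtains $\sum_x s_x q_x = \sum_x w_x \delta_x \le \frac{1}{2}(\max_x w_x - \min_x w_x)\sum_x |\delta_x|$.

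It remains to bound the range of the potential, $\max_x w_x - \min_x w_x \le \frac{2}{d}\lfloor d^2/4\rfloor$, uniformly over all $s$ with $|s_x|\le 1$; combining this with the previous estimate and taking the maximum over $s$ yields the lemma. For the range bound I would write any difference (say $j>l$) as $w_j - w_l = \frac{n}{d}\sum_x s_x - B$ with $n = j-l$ and $B = \sum_{i=l}^{j-1}s_i$, then split $\sum_x s_x = B + B'$ into the block $B$ and its complement $B'$, with $|B|\le n$ and $|B'|\le d-n$. This gives $w_j - w_l = \frac{n}{d}B' - \frac{d-n}{d}B \le \frac{n}{d}|B'| + \frac{d-n}{d}|B| \le \frac{2n(d-n)}{d} \le \frac{2}{d}\lfloor d^2/4\rfloor$, the last step maximising $n(d-n)$ over integers $n$.

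The main obstacle is obtaining the constant \emph{exactly} rather than off by a factor of two. A crude $\ell^1\to\ell^1$ operator-norm bound on the inverse of the difference operator already yields the inequality with the larger constant $\frac{2}{d}\lfloor d^2/4\rfloor$; the decisive factor-of-two improvement hinges on genuinely exploiting $\sum_x \delta_x = 0$ by re-centring the potential $w$ (so that $\frac{1}{2}(\max w - \min w)$ replaces $\max_x|w_x|$). As a consistency check I would verify that the two-level step function $q_x \in \{+c,-c'\}$ with $\sum_x q_x = 0$ saturates every inequality above, confirming that the constant $\frac{1}{d}\lfloor d^2/4\rfloor$ is tight.
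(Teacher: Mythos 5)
Your proof is correct, and it takes a genuinely different route from the paper's. The paper argues on the primal side: it writes the uniform distribution as the equal mixture $\frac{1}{d}\sum_{i=0}^{d-1}P_{X\oplus i}$, applies convexity of $D$ to get $D(P_X,1/d)\leq\frac{1}{d}\sum_{i}D(P_X,P_{X\oplus i})$, and then bounds each term by $\min(i,d-i)\,D(P_{X\oplus 1},P_X)$ via the triangle inequality along the shorter arc of the cycle, so the constant emerges as the \emph{sum} $\sum_{i=0}^{d-1}\min(i,d-i)=\lfloor d^2/4\rfloor$. You instead work on the dual side: $\ell^1$--$\ell^\infty$ duality, cyclic summation by parts against a potential $w$ with $w_x-w_{x+1}=s_x-\bar{s}$, re-centring of $w$ (which is where the hypothesis $\sum_x\delta_x=0$ earns the crucial factor of two), and an oscillation bound in which the constant emerges as the \emph{maximum} $\max_n n(d-n)=\lfloor d/2\rfloor\lceil d/2\rceil=\lfloor d^2/4\rfloor$. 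Both proofs exploit the same two structural facts (cyclicity and $\sum_x q_x=0$), and it is a nice manifestation of primal--dual symmetry that the same constant appears once as a sum over shifts and once as a maximum over arc lengths. What each buys: the paper's argument is shorter and needs nothing beyond convexity and the triangle inequality for total variation distance; yours produces an explicit dual certificate (the extremal sign pattern $s$ and the tent-shaped potential $w$), makes tightness transparent --- your saturating two-level step function is exactly the paper's extremal example $P_X=(2/d,\ldots,2/d,0,\ldots,0)$ for even $d$ --- and follows the standard template for discrete Poincar\'e/Wirtinger-type inequalities, so it generalizes more readily to other norms or to graphs other than the cycle. All of your individual steps (consistency of the potential on the cycle, the reindexing identity, the split $\sum_x s_x=B+B'$ with $|B|\leq n$, $|B'|\leq d-n$, and the final maximization over $n$) check out.
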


\begin{proof}
  Using $\frac{1}{d}\sum_{i=0}^{d-1}P_{X\oplus i}=1/d$ and the
  convexity of $D$, we find 
  \begin{align*}
    D(P_X,1/d)
    &=D\left(\frac{1}{d}\sum_{i=0}^{d-1}P_X,\frac{1}{d}\sum_{i=0}^{d-1}P_{X\oplus i}\right)\\
    &\leq\frac{1}{d}\sum_{i=0}^{d-1}D(P_X,P_{X\oplus i})\, .
  \end{align*}
  Because $D(P_{X\oplus(i-1)},P_{X\oplus i})=D(P_{X\oplus 1},P_X)$
  for all $i$ we have for $i\leq d/2$
  \begin{align*}
    D(P_X,P_{X\oplus
      i})&\leq D(P_X,P_{X\oplus(i-1)})+D(P_{X\oplus(i-1)},P_{X\oplus
      i})\\
&=D(P_X,P_{X\oplus(i-1)})+D(P_{X\oplus 1}, P_X) \ .
  \end{align*}
  Using this multiple times yields $D(P_X,P_{X\oplus i})\leq
  iD(P_{X\oplus 1}, P_X)$. Similarly, for $i\geq d/2$, we use
  \begin{align*}
    D(P_X,P_{X\oplus  i}) &\leq D(P_X,P_{X\oplus(i+1)})+D(P_{X\oplus(i+1)},P_{X\oplus
      i})\\
&=D(P_X,P_{X\oplus(i+1)})+D(P_{X\oplus 1}, P_X)
  \end{align*}
  multiple times to yield $D(P_X,P_{X\oplus i})\leq {(d-i)}D(P_{X\oplus
    1}, P_X)$.  Thus,
  \begin{multline*}
    \sum_{i=0}^{d-1}D(P_X,P_{X\oplus i})\\
\leq\left(\sum_{i=0}^{\lfloor d/2\rfloor}i+\sum_{i=\lfloor d/2\rfloor+1}^{d-1}(d-i)\right)D(P_{X\oplus
      1}, P_X)\\
=\left\lfloor\frac{d^2}{4}\right\rfloor D(P_{X\oplus 1}, P_X)\, .
  \end{multline*}
Combining this with the above concludes the proof. 
\end{proof}

Note that there are distributions that achieve the bound of
Lemma~\ref{lem_disttoun}, as can be seen for $d$ even and the
distribution $P_X=(2/d,2/d,\ldots,2/d,0,0,\ldots)$, for which
$D(P_X,1/d)=1/2$ and $D(P_{X\oplus 1}, P_X)=2/d$.

\end{document}